\documentclass[journal]{IEEEtran}

\usepackage[utf8]{inputenc}
\usepackage{balance}
\usepackage{cite}
\usepackage{float}
\usepackage{multicol,multirow}
\usepackage{makecell,booktabs}
\usepackage{xurl}
\usepackage{hyperref}
\hypersetup{
    colorlinks=true,
    linkcolor=black,
    anchorcolor=black,
    citecolor=black,
    filecolor=black,
    menucolor=black,
    runcolor=black,
    urlcolor=black,
}
\usepackage{subfigure}
\usepackage{graphicx}
\usepackage{amsfonts}
\usepackage{amssymb}
\usepackage{amsmath,mathtools}
\usepackage{array}
\usepackage{longtable}
\usepackage[ruled,vlined]{algorithm2e}
\usepackage{color}
\usepackage{svg}
\usepackage{bbm}
\usepackage{bm}
\usepackage{comment}
\usepackage{forest}
\usepackage{etoolbox}
\usepackage{xspace}
\usepackage{tabularx}
\usepackage{amsthm}
\usepackage{cleveref}

\newtheorem{lemma}{Lemma}
\theoremstyle{remark}

\newcolumntype{L}[1]{>{\raggedright\let\newline\\\arraybackslash\hspace{0pt}}m{#1}}
\newcolumntype{C}[1]{>{\centering\let\newline\\\arraybackslash\hspace{0pt}}m{#1}}
\newcolumntype{R}[1]{>{\raggedleft\let\newline\\\arraybackslash\hspace{0pt}}m{#1}}
\newcolumntype{Y}{>{\raggedright\arraybackslash}X}
\newcolumntype{P}[1]{>{\raggedright\arraybackslash}p{#1}}

\begin{document}

\author{
    Hyun~Joong~Kim,~\IEEEmembership{Member,~IEEE,}
}

\title{Zero-Knowledge Verification of Transaction Guides for P2P Energy Trading in Distribution Networks}
\maketitle

\begin{abstract}
Peer-to-peer (P2P) energy trading requires network-aware coordination because transactions are physically realized through distribution networks. However, Sensitivity-based coordination causes a confidentiality–verifiability tradeoff, as network sensitivities may reveal vulnerable components while undisclosed sensitivities prevent participants from verifying utility-provided transaction guides. This paper proposes a zero-knowledge-proof-based method for verifying the computational integrity of network-constrained transaction guides with respect to committed private network data, without exposing network-sensitivity information. The guide defines admissible injection and withdrawal volumes derived from sign-decomposed sensitivity matrices while satisfying balance, voltage, line-flow, and optimality conditions. These conditions are encoded in an arithmetic circuit, represented as R1CS constraints and a quadratic arithmetic program, and verified using a bilinear pairing. Blockchain commitments bind the approved circuit, public inputs, statement identifiers, proof, and verification result for tamper-evident auditability. The proposed proof certifies correct guide computation from committed network data; the authenticity of the committed network data is handled through an explicit registration and attestation assumption. Case studies on a modified IEEE 33-bus system show satisfying network constraints post-clearing, rejection of public-input and witness-inconsistency attacks, and practical on-chain overhead, with an 806-byte proof.
\end{abstract}

\section{Introduction}
The increasing penetration of renewable and distributed energy resources (DERs) is turning passive distribution-level end-users into active prosumers, making peer-to-peer (P2P) energy trading a promising local market structure for direct electricity exchange~\cite{tushar2020peer, kim2023pricing}. Although P2P trading is settled as economic contracts, its power flows are realized through distribution networks. Participant locations and trading volumes can therefore compromise voltage security, cause line congestion, and increase losses. Thus, network-aware coordination is required to align market outcomes with distribution system operating constraints~\cite{guerrero2019decentralized, paudel2020peer}.

Recent studies have proposed various coordination methods that integrate P2P energy trading with distribution system operation. Representative approaches use voltage sensitivity coefficients, power transfer distribution factors (PTDFs), or loss sensitivity factors to quantify the impact of each transaction on voltages, line capacities, and losses, and reflect these impacts in transaction limits, network usage charges, or market-clearing signals~\cite{guerrero2019decentralized, feng2023peer}. Other studies consider flexibility-market-based coordination, game-theoretic market clearing with network operators, or causality-based network cost allocation to encourage grid-aware peer matching~\cite{khorasany2022framework, belgioioso2022operationally, kim2024causality}. Although these approaches can incorporate network constraints into market mechanisms while preserving the decentralized nature of P2P trading, most of them require network sensitivity or equivalent network model information.

This reliance on network sensitivity creates a critical information-disclosure issue. PTDFs and voltage sensitivity are not merely operational auxiliary data; they can also serve as attack-optimization information because they indicate how an injection change at a specific bus affects line flows, voltages, and state-estimation results. Since more sensitive lines reach their limits faster under identical conditions, sensitivity information reveals vulnerable components~\cite{yan2014integrated}. Once exposed, an attacker may exploit historical data and partial network information to induce targeted line overloads, even without the full network model, as shown in false data injection (FDI) attack studies~\cite{zhang2018can}. Voltage-sensitivity and replay-attack studies show that this information can support attack-resource concentration, measurement selection, timing decisions, detection evasion, and maximized impact on state estimation~\cite{liu2025voltage, de2025replay}. Thus, disclosing sensitivity matrices, rankings, meter criticality, or bus/line impact factors can directly support target selection, attack-resource allocation, and stealthiness.

In the conventional distribution-utility model, the incumbent distribution utility owns and operates the local distribution network and often performs the distribution system operator (DSO), unless these functions are assigned to an independent DSO~\cite{demartini2015distribution, thomas2018evolution, warwick2016electricity}. When such a utility, acting in its DSO capacity, derives transaction guides, network charges, or transaction approval decisions from private network information, market participants cannot easily verify whether these outputs are honestly generated according to actual distribution system constraints. This issue is particularly important in P2P markets because prosumers can alter the conventional utility-centered electricity supply structure and may reduce energy sales revenues~\cite{kloppenburg2019digital, dippenaar2026synthesis}. Without verifiability, participants cannot distinguish legitimate security-oriented signals from strategic signals reflecting particular interests.

Blockchain technology has been proposed to reduce reliance on centralized market operators and to strengthen trust among prosumers. Existing studies use blockchain for immutable transactions records, smart-contract-based billing and settlement, bidding, market clearing, authentication, and consensus through mechanisms such as PBFT, IBFT, and PoA~\cite{esmat2021novel, abdella2021architecture, dong2022decentralized}. Privacy and security-enhancing techniques, including digital signatures, signcryption, permissioned validators, off-chain/side-chain structures, homomorphic encryption, and zero-knowledge proofs, have also been discussed, but mostly for transaction-record protection, user authentication, settlement-condition verification, and scalability~\cite{gao2023blockchain, veerasamy2024blockchain}. Hence, existing blockchain-based P2P energy trading studies have not fully exploited cryptographic verifiability to prove the honest execution of optimization or settlement computations without revealing sensitive inputs.

To bridge this gap, this paper proposes a zero-knowledge-proof-based method for verifying network-constrained transaction guidelines without exposing network sensitivity. A zero-knowledge proof can certify that a public transaction guide was computed correctly from committed private inputs according to an approved circuit~\cite{gmr1989, groth2016}, but it does not by itself certify that the committed private inputs are the true physical network data. Therefore, this paper focuses on the computational-integrity layer of transaction-guide verification. Comparisons among the key features considered in the existing literature against our proposed method are presented in Table~\ref{tab:literature_comparison}. The main contributions of this paper are listed as follows.
\begin{itemize}
  \item We formulate a transaction guide for P2P energy trading as a robust box of admissible active-power injection and withdrawal deviations. The guide enforces aggregate voltage limits and line-flow limits using sign-decomposed voltage and line-flow sensitivity. Unlike transaction-specific approval rules, the guide provides an ex-ante operating envelope that can be used by different distribution-level market-clearing mechanisms.
  \item We develop an arithmetic-circuit representation of the transaction-guide verification problem. The circuit binds committed network parameters, Y-bus construction, Jacobian construction, sensitivit computation, sign-split consistency, primal feasibility, and LP optimality into a single proof statement. This allows participants to verify the computational integrity of the published guide without observing private network sensitivities.
  \item We integrate the verification circuit with a Groth16-style zero-knowledge proof and a blockchain-based statement registry. The blockchain layer anchors the approved circuit, verification key, public-input hash, statement context, nonce, and verification result, thereby preventing public-input substitution, proof replay, and post-proof guide tampering.  
  \item We explicitly characterize the trust boundary of the proposed method. The proof certifies correct computation with respect to committed private network data, while the authenticity of the committed network model is treated as an external attestation requirement. This distinction clarifies the security guarantee provided by zero-knowledge verification in distribution-level market operations.
\end{itemize}

\begin{table}[t]
    \caption{Comparison of Existing Literature and the Proposed Method}
    \label{tab:literature_comparison}
    \centering
    \scriptsize
    \renewcommand{\arraystretch}{1.15}
    \setlength{\tabcolsep}{2pt}
    \begin{tabularx}{\linewidth}{
        >{\raggedright\arraybackslash}p{0.34\linewidth}
        >{\centering\arraybackslash}X
        >{\centering\arraybackslash}X
        >{\centering\arraybackslash}X
        >{\centering\arraybackslash}X}
    \toprule
    \makecell[c]{Refs.}
    & \makecell[c]{Network\\constraints}
    & \makecell[c]{Sensitivity}
    & \makecell[c]{Blockchain}
    & \makecell[c]{Verification} \\
    \midrule
    \cite{guerrero2019decentralized, paudel2020peer, feng2023peer, khorasany2022framework, belgioioso2022operationally, kim2024causality}
    & O & O & X & X \\
    \midrule
    \cite{esmat2021novel, abdella2021architecture, dong2022decentralized, gao2023blockchain, veerasamy2024blockchain}
    & X & X & O & X \\
    \midrule
    Proposed method
    & O & O & O & O \\
    \bottomrule
    \end{tabularx}
    \vspace{-0mm}
\end{table}

\section{Preliminaries}
Let $[n]\coloneqq\{1,\ldots,n\}$. The circuit field is denoted by $\mathbb F_r$, whereas the elliptic-curve base field is denoted by $\mathbb F_p$. Bold symbols denote vectors and matrices, and $x_i$ denotes the $i$-th component of $\bm x$.

An arithmetic circuit over $\mathbb F_r$ is represented by a rank-1 constraint system (R1CS). Let $\bm a=(a_0,a_1,\ldots,a_m)$ be the assignment vector with $a_0=1$, public inputs $a_1,\ldots,a_\ell$, and private witness or auxiliary variables in the remaining entries. For each primitive constraint $q$, R1CS coefficient vectors $\bm u_q,\bm v_q,\bm w_q$ satisfy
\begin{equation}
    \langle \bm u_q,\bm a\rangle\langle \bm v_q,\bm a\rangle
    = \langle \bm w_q,\bm a\rangle,\quad q=1,\ldots,n_c .
    \label{eq:R1CS_std}
\end{equation}
To obtain succinct verification, the R1CS is transformed into a quadratic arithmetic program (QAP). For distinct points $\omega_q\in\mathbb F_r$, define $t(X)=\prod_{q=1}^{n_c}(X-\omega_q)$ and interpolate polynomials $u_i(X),v_i(X),w_i(X)$ such that their values at $\omega_q$ equal the corresponding R1CS coefficients. With
\begin{equation}
    \label{eq:qap_uvw}
    \begin{alignedat}{2}
    U(X) &= \sum_i a_i u_i(X), 
    &\quad V(X) &= \sum_i a_i v_i(X),\\
    W(X) &= \sum_i a_i w_i(X).
    \end{alignedat}
\end{equation}
the assignment satisfies the circuit if and only if
\begin{equation}
    U(X)V(X)-W(X)=t(X)H(X)
\end{equation}
for some quotient polynomial $H(X)$~\cite{gennaro2013quadratic}.

The proposed circuit uses Poseidon as a circuit-efficient hash function~\cite{grassi2021poseidon}. A binary Merkle tree is built by repeatedly applying Poseidon to pairs of child hashes, and the resulting root commits to the network-parameter. Zero-knowledge verification is implemented with a Groth16-style pairing argument. Let $\mathbb G_1,\mathbb G_2$ be cyclic groups of prime order $r$, and let $e:\mathbb G_1\times\mathbb G_2\rightarrow\mathbb G_T$ be a non-degenerate bilinear pairing. Bilinearity, $e([a]P,[b]Q)=e(P,Q)^{ab}$, enables constant-size consistency checks in the common reference string~\cite{miller1986short,freeman2010taxonomy,groth2016size}. Elliptic-curve discrete-logarithm hardness provides the underlying computational assumption~\cite{koblitz1987elliptic,menezes1996handbook}.

\section{Transaction Guide}
We consider a power system represented by $\mathcal G=(\mathcal N,\mathcal L)$, where $\mathcal N:=\{1,\dots,N\}$ denotes the set of buses and $\mathcal L$ denotes the set of monitored directed lines.
Bus $N$ is selected as the slack bus, and the set of non-slack buses is $\mathcal N^\circ:=\mathcal N\setminus\{N\}=\{1,\dots,N-1\}$, with $n:=N-1$.
Each monitored line $l\in\mathcal L$ is associated with an ordered bus pair $(i,j)$.
For each bus $i\in\mathcal N$, $\mathcal N_i:=\{j\in\mathcal N\mid (i,j)\in\mathcal L \ \text{or} \ (j,i)\in\mathcal L\}$ denotes the set of buses adjacent to bus $i$.
Let $\mathcal S_i:=\mathcal N_i\cup\{i\}$.
The line admittance between buses $i$ and $j$ is denoted by $y_{ij}=g_{ij}+\mathrm{j}b_{ij}$.
The bus-admittance matrix is denoted by $\bm Y\in\mathbb C^{N\times N}$, with entries $Y_{ij}$.
For each bus $i$, $\theta_i$ and $|V_i|$ denote the voltage angle and magnitude, whereas $P_i$ and $Q_i$ denote the active and reactive power injections. 
For each monitored line $l=(i,j)\in\mathcal L$, $S_l^{\mathrm F}=P_l^{\mathrm F}+\mathrm{j}Q_l^{\mathrm F}$ denotes the sending-end complex power flow.

\subsection{Voltage Sensitivity}
Let the operating point be $\bm x^0:=[(\bm\theta^0)^\top,(\bm v^0)^\top]^\top \in\mathbb R^{2n}.$ 
The Jacobian matrix evaluated at $\bm x^0$ is
\begin{equation}
    \bm J^{0} :=
    \left. \frac{\partial(\bm P,\bm Q)}{\partial(\bm\theta,|\bm V|)}
    \right|_{\bm x=\bm x^0}
    \in\mathbb R^{2n\times 2n}.
    \end{equation}
Assuming that $\bm J^0$ is nonsingular, the local state sensitivity with respect to power-injection variations is
\begin{align}
    \begin{bmatrix}
    d\bm\theta\\
    d|\bm V|
    \end{bmatrix}
    =
    (\bm J^{0})^{-1}
    \begin{bmatrix}
    d\bm P\\
    d\bm Q
    \end{bmatrix}. 
    \label{eq:invJ}
\end{align}
Let the Jacobian be partitioned at $\bm x^0$ as
\begin{equation}
\bm J^0=
    \begin{bmatrix}
    \bm J^{11} & \bm J^{12}\\
    \bm J^{21} & \bm J^{22}
    \end{bmatrix}
    \in\mathbb R^{2n\times 2n}.
\end{equation}
Let $\bm E_P:=[\bm I_n,\bm 0]^\top$. We define the voltage-magnitude sensitivity with respect to active power as the lower $n$ rows of $(\bm J^0)^{-1}\bm E_P$:
\begin{equation}
    \bm A^{\mathrm{V}} :=
    \left.
    \frac{\partial |\bm V|}{\partial \bm P}
    \right|_{\bm x=\bm x^0}
    \in\mathbb R^{n\times n}.
    \label{eq:def_AV}
\end{equation}

\subsection{Line-Flow Sensitivity}
Define the apparent-power magnitude vector $|\bm S^{\mathrm F}|\in\mathbb R^L$ by
\begin{equation}
    |\bm S^{\mathrm F}|^{\odot 2} := \bm P^{\mathrm F}\odot \bm P^{\mathrm F} + \bm Q^{\mathrm F}\odot \bm Q^{\mathrm F},
    \label{eq:Sf_def}
\end{equation}
where $\odot$ denotes the Hadamard product. Differentiating~\eqref{eq:Sf_def}, the line-flow magnitude sensitivity with respect to active-power injections is defined as
\begin{equation}
\label{eq:def_AS}
\begin{aligned}
\bm A^{\mathrm F}
&:= 
\frac{\partial |\bm S^{\mathrm F}|}{\partial \bm P}
\Big|_{\bm x=\bm x^0} \\
&= \operatorname{diag}(|\bm S^{\mathrm F,0}|)^{-1}
\Big[
\operatorname{diag}(\bm P^{\mathrm F,0})
\frac{\partial \bm P^{\mathrm F}}{\partial \bm P}
\Big|_{\bm x=\bm x^0} \\
&\quad
+
\operatorname{diag}(\bm Q^{\mathrm F,0})
\frac{\partial \bm Q^{\mathrm F}}{\partial \bm P}
\Big|_{\bm x=\bm x^0}
\Big].
\end{aligned}
\end{equation}
where line-flow derivatives with respect to active-power injections are obtained from the chain rule and sensitivities in~\eqref{eq:invJ}.

\subsection{Robust Transaction Guide}
Assuming that reactive-power variations are negligible, the first-order voltage approximation satisfies
\begin{align}
    |\underline{\bm V}|
    \preceq
    |\bm V|^0+\bm A^{\mathrm{V}}\Delta\bm P
    \preceq
    |\overline{\bm V}|.
    \label{eq:V_linear}
\end{align}
Similarly, the line-flow magnitudes satisfy the upper-bound approximation
\begin{align}
    |\bm S^{\mathrm F}|^0+\bm A^{\mathrm{F}}\Delta\bm P
    \preceq
    \overline{|\bm S^{\mathrm F}|}.
    \label{eq:S_linear}
\end{align}

The transaction guide is defined as the box
\begin{equation}
    \mathcal B(\bm u,\bm\ell)
    :=
    \left\{
    \Delta\bm P\in\mathbb R^n\ \middle|\
    -\bm\ell\preceq\Delta\bm P\preceq\bm u
    \right\},
    \label{eq:buswise_box}
\end{equation}
where $\bm u,\bm\ell\in\mathbb R_+^n$ denote the upward and downward active-power transaction widths around the operating point. Let $\overline{\bm p},\underline{\bm p}\in\mathbb R_+^n$ denote upper
bounds on upward and downward transactions. To obtain a robust box that is valid for every $\Delta\bm P\in\mathcal B(\bm u,\bm\ell)$, we use the elementwise sign decomposition
\begin{align}
    \bm A^{\mathrm{V}}=\bm A^{\mathrm{V}+}-\bm A^{\mathrm{V}-},\qquad
    \bm A^{\mathrm{F}}=\bm A^{\mathrm{F}+}-\bm A^{\mathrm{F}-},
    \label{eq:AV_AS_signsplit}
\end{align}
where all sign-split matrices are elementwise nonnegative. The weighted maximum-width guide is obtained from
\begin{subequations}\label{eq:max_box_LP_vector}
\begin{align}
    \max_{\bm u,\bm\ell}\quad
    & \bm w^\top(\bm u+\bm\ell) \label{eq:max_box_obj}\\
    \text{s.t.}\quad
    & \bm 0\preceq \bm u\preceq \overline{\bm p},\qquad \bm 0\preceq \bm\ell\preceq \underline{\bm p}, \label{eq:max_box_cap_vec}\\
    & \bm 1^\top \bm u = \bm 1^\top \bm\ell, \label{eq:max_box_balance_vec}\\
    & |\bm V|^0-\bm A^{\mathrm{V}+}\bm\ell-\bm A^{\mathrm{V}-}\bm u
      \succeq |\underline{\bm V}|,
    \label{eq:max_box_Vlow_vec}\\
    & |\bm V|^0+\bm A^{\mathrm{V}+}\bm u+\bm A^{\mathrm{V}-}\bm\ell
      \preceq |\overline{\bm V}|,
    \label{eq:max_box_Vup_vec}\\
    & |\bm S^{\mathrm F}|^0+\bm A^{\mathrm{F}+}\bm u+\bm A^{\mathrm{F}-}\bm\ell
      \preceq \overline{|\bm S^{\mathrm F}|}.
    \label{eq:max_box_Sup_vec}
\end{align}
\end{subequations}
Problem~\eqref{eq:max_box_LP_vector} computes a transaction guide. The vector $\bm u$ denotes the maximum admissible injection at each participating bus, whereas $\bm\ell$ denotes the maximum admissible withdrawal. The balance constraint \eqref{eq:max_box_balance_vec} ensures that the total admissible injection and withdrawal volumes are equal. The constraints \eqref{eq:max_box_Vlow_vec}--\eqref{eq:max_box_Sup_vec} enforce robust voltage and line-flow security conditions under the sign-split sensitivity matrices.

\begin{lemma}[First-order security of the transaction guide]
Suppose that $\bm A^{\mathrm V}=\bm A^{\mathrm V+}-\bm A^{\mathrm V-}$ and $\bm A^{\mathrm F}=\bm A^{\mathrm F+}-\bm A^{\mathrm F-}$ are elementwise sign decompositions, where all sign-split matrices are elementwise nonnegative.
If $(\bm u,\bm\ell)$ satisfies Problem~\eqref{eq:max_box_LP_vector}, then every
$\Delta\bm p\in\mathcal B(\bm u,\bm\ell)$ satisfies~\eqref{eq:V_linear} and~\eqref{eq:S_linear}.
\end{lemma}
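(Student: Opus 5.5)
The argument is a row-by-row bound over the box. Fix any $\Delta\bm p\in\mathcal B(\bm u,\bm\ell)$, so that $-\bm\ell\preceq\Delta\bm p\preceq\bm u$ with $\bm u,\bm\ell\succeq\bm 0$ by \eqref{eq:max_box_cap_vec}. For a generic row $\bm a^\top=(\bm a^+)^\top-(\bm a^-)^\top$ of any of the sensitivity matrices, with $\bm a^\pm\succeq\bm 0$, I will establish the elementary two-sided estimate
\begin{equation*}
-(\bm a^+)^\top\bm\ell-(\bm a^-)^\top\bm u\;\le\;\bm a^\top\Delta\bm p\;\le\;(\bm a^+)^\top\bm u+(\bm a^-)^\top\bm\ell .
\end{equation*}
The proof is termwise. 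Since $a_k^+\ge0$ and $-\ell_k\le\Delta p_k\le u_k$, we get $-a_k^+\ell_k\le a_k^+\Delta p_k\le a_k^+u_k$. Multiplying the same interval by $-a_k^-\le0$ reverses it, giving $-a_k^-u_k\le -a_k^-\Delta p_k\le a_k^-\ell_k$. Summing over $k$ yields the estimate. Applying it to every row at once gives the matrix form
\begin{equation*}
-\bm A^{+}\bm\ell-\bm A^{-}\bm u\preceq\bm A\Delta\bm p\preceq\bm A^{+}\bm u+\bm A^{-}\bm\ell
\end{equation*}
for $\bm A\in\{\bm A^{\mathrm V},\bm A^{\mathrm F}\}$.

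Next I combine this with the LP constraints. For the voltage condition \eqref{eq:V_linear}, add $|\bm V|^0$ to the matrix estimate with $\bm A=\bm A^{\mathrm V}$:
\begin{itemize}
\item the lower side, together with \eqref{eq:max_box_Vlow_vec}, gives $|\underline{\bm V}|\preceq|\bm V|^0-\bm A^{\mathrm V+}\bm\ell-\bm A^{\mathrm V-}\bm u\preceq|\bm V|^0+\bm A^{\mathrm V}\Delta\bm p$;
\item the upper side, together with \eqref{eq:max_box_Vup_vec}, gives $|\bm V|^0+\bm A^{\mathrm V}\Delta\bm p\preceq|\bm V|^0+\bm A^{\mathrm V+}\bm u+\bm A^{\mathrm V-}\bm\ell\preceq|\overline{\bm V}|$.
\end{itemize}
For the line-flow condition \eqref{eq:S_linear}, only the upper side of the estimate with $\bm A=\bm A^{\mathrm F}$ is needed. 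Combined with \eqref{eq:max_box_Sup_vec}, it gives $|\bm S^{\mathrm F}|^0+\bm A^{\mathrm F}\Delta\bm p\preceq\overline{|\bm S^{\mathrm F}|}$.

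The balance constraint \eqref{eq:max_box_balance_vec} and the objective play no role here, so feasibility alone suffices. The only points requiring care, and hence the main (minor) obstacle, are two sign issues. First, the nonnegativity of $\bm u$ and $\bm\ell$ ensures the box is nonempty and correctly oriented. Second, the sign reversal in the $\bm A^{-}$ term must be tracked, since it is exactly what makes the robust counterparts in \eqref{eq:max_box_Vlow_vec}--\eqref{eq:max_box_Sup_vec} pair $\bm A^{+}$ with $\bm\ell$ and $\bm A^{-}$ with $\bm u$ on the lower side, and the reverse on the upper side. I also note that the argument does not require the decomposition to be the canonical one with $\bm A^{+}=\max(\bm A,0)$. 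Any nonnegative split works, because only $\bm A^{\pm}\succeq\bm 0$ and $\bm A=\bm A^{+}-\bm A^{-}$ are used.
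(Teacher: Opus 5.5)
Your proof is correct and follows the same route as the paper. Both arguments bound $\bm A\Delta\bm p$ between $-\bm A^{+}\bm\ell-\bm A^{-}\bm u$ and $\bm A^{+}\bm u+\bm A^{-}\bm\ell$ using only the nonnegativity of the split matrices, then chain these bounds with \eqref{eq:max_box_Vlow_vec}--\eqref{eq:max_box_Sup_vec}; your termwise derivation just spells out the step the paper states in one line.
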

\begin{proof}
For any $\Delta\bm p\in\mathcal B(\bm u,\bm\ell)$, $-\bm\ell\preceq\Delta\bm p\preceq\bm u$. Since the sign-split matrices are elementwise nonnegative,
\begin{align}
    -\bm A^{\mathrm{V}+}\bm\ell-\bm A^{\mathrm{V}-}\bm u
    \preceq
    \bm A^{\mathrm{V}}\Delta\bm p
    \preceq
    \bm A^{\mathrm{V}+}\bm u+\bm A^{\mathrm{V}-}\bm\ell,
\end{align}
and
\begin{align}
    \bm A^{\mathrm{F}}\Delta\bm p
    \preceq
    \bm A^{\mathrm{F}+}\bm u+\bm A^{\mathrm{F}-}\bm\ell.
\end{align}
Combining these bounds with 
\eqref{eq:max_box_Vlow_vec}--\eqref{eq:max_box_Sup_vec} proves the claim.
\end{proof}
The transaction guide provides a robust operating envelope for upward and downward active-power deviations at each bus. The equality in~\eqref{eq:max_box_balance_vec} makes the published upward and downward guide volumes balance-aware, while the market-clearing layer selects the realized seller--buyer transaction within this envelope.

\section{Arithmetic Circuit Design}
This section constructs an arithmetic circuit for the proposed verification statement. All R1CS constraints are enforced over the circuit field $\mathbb F_r$.

\subsection{Verification Statement and Input Structure}
The public input vector includes the published guide $(\bm u^\star,\bm \ell^\star)$, the operating point, transaction caps, voltage and line-flow limits, and objective weights:
\begin{align}
\bm x_{\mathrm{pub}}:= \Big(R_Y, \bm\theta^0, |\bm V|^0, \overline{\bm p}, \underline{\bm p}, \bm u^\star, \bm\ell^\star, |\underline{\bm V}|, |\overline{\bm V}|, |\bm S^{\mathrm F}|^0, \overline{|\bm S^{\mathrm F}|}, \bm w \Big).
\label{eq:x_pub}
\end{align}
Thus, the proof is bound to the entire guide-computation instance, not only to the guide values. Any change in the guide, operating point, transaction caps, security limits, or objective weights alters the public-input accumulation term and invalidates a proof generated for another instance.

The private witness contains the network parameters and all intermediate variables required to verify sensitivity construction and guide optimality, which are not revealed to the verifier:
\begin{align}
\bm x_{\mathrm{priv}} := \Big(\bm G, \bm B, \bm Y, \bm J, \bm A, \bm A^{\mathrm{V}}, \bm A^{\mathrm{F}}, \bm A^{\mathrm{V}+},
\bm A^{\mathrm{V}-}, \bm A^{\mathrm{F}+}, \bm A^{\mathrm{F}-}, \bm\eta \Big),
\label{eq:x_wit}
\end{align}
where $(\bm\eta)$ collects the KKT multipliers, slack variables, and auxiliary variables introduced for intermediate products.

Given the public input vector in~\eqref{eq:x_pub}, acceptance of the proof implies that the prover has demonstrated that:
\begin{enumerate}
    \item there exist line parameters $\bm Y= \bm G + \mathrm{j}\bm B$ consistent with the commitment $R_Y$;
    \item the admittance matrix $\bm Y$ and the Jacobian matrix $\bm J$ are correctly constructed from the committed network and evaluated at the operating state $(\bm\theta^0,|\bm V|^0)$;
    \item the state-sensitivity matrix $\bm A$ satisfies the linear system
    $\bm J^0\bm A=\bm E$;
    \item the sensitivity matrices $(\bm A^{\mathrm{V}},\bm A^{\mathrm{F}})$, together with their
    sign-split forms $(\bm A^{\mathrm{V}+},\bm A^{\mathrm{V}-},\bm A^{\mathrm{F}+},\bm A^{\mathrm{F}-})$, are
    properly derived from $\bm A$; and
    \item $(\bm u^\star,\bm\ell^\star)$ is optimal for
    \eqref{eq:max_box_LP_vector}, as certified by valid KKT multipliers and
    slack variables.
\end{enumerate}

\subsection{Network-Parameter Commitment}
The committed network parameters are bound to the public root $R_Y$ through a Poseidon Merkle tree. For each monitored line $l=(i,j)\in\mathcal L$, the circuit forms the leaf
\begin{equation}
    h_l=\mathrm{Poseidon}(1,l,G_l,B_l),
    \label{eq:poseidon_network_leaf}
\end{equation}
where the leading value is a domain-separation tag. Internal Merkle nodes are computed with the same Poseidon compression rule as in Section~II, and the final root is constrained by
\begin{equation}
    R_Y^{\mathrm{calc}}-R_Y=0.
    \label{eq:network_root_binding}
\end{equation}
Thus, the line parameters used in the Y-bus, Jacobian, sensitivity, and guide-optimality constraints are the same parameters committed by the public input $R_Y$.

\subsection{Y-Bus Construction}
For each adjacent bus pair $(i,j)$ with $i,j\in\mathcal N$ and $j\in\mathcal N_i$, the bus-admittance construction is enforced by
\begin{subequations}
    \begin{align}
    Y_{ij}^G+G_l &= 0,  && l=(i,j)\in\mathcal L, \label{eq:r1cs_Y_sparse_Goff}\\
    Y_{ij}^B+B_l &= 0,  && l=(i,j)\in\mathcal L, \label{eq:r1cs_Y_sparse_Boff}\\
    Y_{ii}^G-\sum_{k\in\mathcal N_i}G_{ik} &= 0,  && i\in\mathcal N, \label{eq:r1cs_Y_sparse_Gdiag}\\
    Y_{ii}^B-\sum_{k\in\mathcal N_i}B_{ik} -2^{-1}B_{i,\mathrm{sh}} &= 0, && i\in\mathcal N. \label{eq:r1cs_Y_sparse_Bdiag}
    \end{align}
\end{subequations}

\subsection{Jacobian Construction}
Letting $\alpha_i:=|V_i^0|^2,\, \beta_{ik}:=|V_i^0||V_k^0|$ denote circuit constants, the bus power injections are defined by
\begin{subequations}
    \begin{align}
        \left(\alpha_iG_{ii}+\sum_{k\in\mathcal N_i}\beta_{ik}v_{ik}\right)\cdot 1 &= P_i,\\
        \left(-\alpha_iB_{ii}+\sum_{k\in\mathcal N_i}\beta_{ik}u_{ik}\right)\cdot 1 &= Q_i,
    \end{align}
\end{subequations}
where $v_{ik}=m^{Gc}_{ik}+m^{Bs}_{ik}$ and $u_{ik}=m^{Gs}_{ik}-m^{Bc}_{ik}$ are implemented by
\begin{subequations}
    \begin{align}
        G_{ik}\cdot c_{ik} &= m^{Gc}_{ik},\\
        B_{ik}\cdot s_{ik} &= m^{Bs}_{ik},\\
        G_{ik}\cdot s_{ik} &= m^{Gs}_{ik},\\
        B_{ik}\cdot c_{ik} &= m^{Bc}_{ik}.
    \end{align}
\end{subequations}
Here, $s_{ik}$ and $c_{ik}$ are trigonometric polynomial approximations defined in Appendix~\ref{App:trig}. The entries of the Jacobian $\bm J$ are represented by
\begin{equation}
    \label{eq:J_entries_cases}
    \begin{aligned}
    J_{ij}^{11} &=
        \begin{cases}
        \beta_{ij}u_{ij}, & i\neq j,\\
        -Q_i-\alpha_iB_{ii}, & i=j,
        \end{cases}
        \\
        J_{ij}^{12} &=
        \begin{cases}
        |V_i^0|v_{ij}, & i\neq j,\\
        P_i|V_i^0|^{-1}+G_{ii}|V_i^0|, & i=j,
        \end{cases}
        \\
        J_{ij}^{21} &=
        \begin{cases}
        -\beta_{ij}v_{ij}, & i\neq j,\\
        P_i-\alpha_iG_{ii}, & i=j,
        \end{cases}
        \\
        J_{ij}^{22} &=
        \begin{cases}
        |V_i^0|u_{ij}, & i\neq j,\\
        Q_i|V_i^0|^{-1}-B_{ii}|V_i^0|, & i=j.
        \end{cases}
    \end{aligned}
\end{equation}

\subsection{Network Sensitivity Binding}
Let $\mathcal N^\circ:=\mathcal N\setminus\{N\}$ denote the set of non-slack buses. We introduce the sensitivity matrix $\bm A$ as a witness and certify it by enforcing
\begin{equation}
    \bm J^0\bm A=\bm E,
    \label{eq:r1cs_JAP}
\end{equation}
where
\begin{equation}
    \bm A :=
    \begin{bmatrix}
        \bm A^\theta\\
        \bm A^{\mathrm V}
    \end{bmatrix}
    \in\mathbb F_r^{2n\times n},
    \qquad
    \bm E:=
    \begin{bmatrix}
        \bm I_n\\
        \bm 0
    \end{bmatrix}
    \in\mathbb F_r^{2n\times n}.
    \label{eq:r1cs_AP_EP}
\end{equation}
Here, $\bm A^\theta,\bm A^{\mathrm V}\in\mathbb F_r^{n\times n}$ encode the voltage-angle and voltage-magnitude sensitivities with respect to active-power injections.
For each $i,k\in\mathcal N^\circ$ and each $j\in\mathcal S_i$, the row-by-column product in \eqref{eq:r1cs_JAP} is enforced by introducing auxiliary product variables $m_{ijk}^{11}$, $m_{ijk}^{12}$, $m_{ijk}^{21}$, and $m_{ijk}^{22}$ as follows:
\begin{subequations}\label{eq:r1cs_AP_sparse1}
    \begin{align}
        J_{ij}^{11,0}\cdot A_{jk}^{\theta} &= m_{ijk}^{11},
        && \forall j\in\mathcal S_i,\\
        J_{ij}^{12,0}\cdot A_{jk}^{\mathrm V} &= m_{ijk}^{12},
        && \forall j\in\mathcal S_i,\\
        J_{ij}^{21,0}\cdot A_{jk}^{\theta} &= m_{ijk}^{21},
        && \forall j\in\mathcal S_i,\\
        J_{ij}^{22,0}\cdot A_{jk}^{\mathrm V} &= m_{ijk}^{22},
        && \forall j\in\mathcal S_i.
    \end{align}
\end{subequations}
The corresponding row-sum constraints are imposed for all $i,k\in\mathcal N^\circ$:
\begin{subequations}\label{eq:r1cs_AP_sparse2}
    \begin{align}
        \sum_{j\in\mathcal S_i}
        \left(m_{ijk}^{11}+m_{ijk}^{12}\right) &= \delta_{ik}, \label{eq:r1cs_AP_sparse2_P}\\
        \sum_{j\in\mathcal S_i} \left(m_{ijk}^{21}+m_{ijk}^{22}\right) &= 0,
        \label{eq:r1cs_AP_sparse2_Q}
    \end{align}
\end{subequations}
Here, $\delta_{ik}$ denotes the Kronecker delta, i.e., $\delta_{ik}=1$ if $i=k$ and $\delta_{ik}=0$ otherwise.

The line-flow magnitude sensitivity matrix $\bm A^{\mathrm F}$ is then certified from the sensitivity matrix $\bm A$ through the chain rule. 
For each $l\in\mathcal L$ and $k\in\mathcal N^\circ$, let
\begin{subequations}\label{eq:def_Sigma_flow_sens}
    \begin{align}
        \Sigma_{lk}^{P} := \left.\frac{\partial P_l^{\mathrm F}}{\partial P_k}\right|_{\bm x=\bm x^0},\quad
        \Sigma_{lk}^{Q} := \left.\frac{\partial Q_l^{\mathrm F}}{\partial P_k}\right|_{\bm x=\bm x^0}.
    \end{align}
\end{subequations}
Assuming that $|S_l^{\mathrm F,0}|\neq 0$ for all monitored lines $l\in\mathcal L$, the apparent-power magnitude sensitivity
\begin{align}
    A_{lk}^{\mathrm F} :=
    \left.
    \frac{\partial |S_l^{\mathrm F}|}{\partial P_k}
    \right|_{\bm x=\bm x^0}
\end{align}
satisfies the equivalent chain-rule relation
\begin{equation}
    A_{lk}^{\mathrm F}|S_l^{\mathrm F,0}|
    =
    P_l^{\mathrm F,0}\Sigma_{lk}^{P}
    +
    Q_l^{\mathrm F,0}\Sigma_{lk}^{Q}.
    \label{eq:flow_mag_chain_rule}
\end{equation}
This relation is enforced in R1CS form by introducing auxiliary variables
$n_{lk}$, $n_{lk}^{P}$, and $n_{lk}^{Q}$:
\begin{subequations}\label{eq:r1cs_AF_final}
    \begin{align}
        A_{lk}^{\mathrm F}\cdot |S_l^{\mathrm F,0}| &= n_{lk}, && \forall l\in\mathcal L,\ \forall k\in\mathcal N^\circ,\\
        P_l^{\mathrm F,0}\cdot \Sigma_{lk}^{P} &= n_{lk}^{P}, && \forall l\in\mathcal L,\ \forall k\in\mathcal N^\circ,\\
        Q_l^{\mathrm F,0}\cdot \Sigma_{lk}^{Q} &= n_{lk}^{Q}, && \forall l\in\mathcal L,\ \forall k\in\mathcal N^\circ,\\
        n_{lk}^{P}+n_{lk}^{Q}-n_{lk} &= 0, && \forall l\in\mathcal L,\ \forall k\in\mathcal N^\circ.
    \end{align}
\end{subequations}
The line-flow equations and their sensitivity expressions used in the R1CS constraints are provided in Appendices~\ref{App:flowEquation} and~\ref{App:lineflowsens}.

\subsection{Sign-Split Consistency Constraints}
We enforce the sign-split identities and elementwise complementarity:
\begin{subequations}\label{eq:r1cs_signsplit}
    \begin{align}
    A_{ik}^{V}-A_{ik}^{V+}+A_{ik}^{V-} &= 0, && \forall i,k\in \mathcal N^\circ,\\
    A_{lk}^{F}-A_{lk}^{F+}+A_{lk}^{F-} &= 0, && \forall l\in \mathcal L,\ \forall k\in \mathcal N^\circ,\\
    A_{ik}^{V+}\cdot A_{ik}^{V-} &= 0, && \forall i,k\in \mathcal N^\circ,\\
    A_{lk}^{F+}\cdot A_{lk}^{F-} &= 0, && \forall l\in \mathcal L,\ \forall k\in \mathcal N^\circ.
    \end{align}
\end{subequations}
To enforce nonnegativity of the sign-split components, we apply range checks. For a value $x$, let $\mathrm{RC}_\kappa(x)$ denote the $\kappa$-bit range check implemented using auxiliary bits $\{b_q\}_{q=0}^{\kappa-1}$:
\begin{subequations}\label{eq:range_check}
    \begin{align}
    b_q(b_q-1) &= 0,
    && \forall q\in\{0,\dots,\kappa-1\},\\
    x-\sum_{q=0}^{\kappa-1}2^q b_q &= 0.
    \end{align}
\end{subequations}
We impose $\mathrm{RC}_\kappa(A_{ik}^{V+}),\; \mathrm{RC}_\kappa(A_{ik}^{V-}),\; \mathrm{RC}_\kappa(A_{lk}^{F+}),\; \mathrm{RC}_\kappa(A_{lk}^{F-})$ for all admissible indices.

\subsection{Transaction-Guide Optimality}
Let $\underline{\rho}^u_i$ and $\overline{\rho}^u_i$ be the dual variables for $u_i\ge 0$ and $u_i\le \overline{p}_i$, and let $\underline{\rho}^{\ell}_i$ and $\overline{\rho}^{\ell}_i$ be the dual variables for $\ell_i\ge 0$ and $\ell_i\le \overline{p}^{\ell}_i$, The voltage-bound dual variables are denoted by $\underline{\mu}_k$ and $\overline{\mu}_k$, the line-flow dual variable by $\nu_l$, and the guide-balance multiplier by $\lambda_{\mathrm b}$. The published guide $(\bm u^\star,\bm\ell^\star)$ is the optimal solution of the maximum-box LP in~\eqref{eq:max_box_LP_vector}, whose optimality is certified by the Karush--Kuhn--Tucker conditions

\subsubsection{Primal Feasibility}
To encode the matrix--vector products, we introduce auxiliary variables 
\[
\Pi_{k,i}^{V+,\ell},\ \Pi_{k,i}^{V-,u},\ \Pi_{k,i}^{V+,u},\ \Pi_{k,i}^{V-,\ell},\ \Pi_{l,i}^{F+,u},\ \Pi_{l,i}^{F-,\ell}\in\mathbb F_r.
\]
They satisfy
\begin{subequations}\label{eq:r1cs_mvm_products}
    \begin{align}
    A_{k,i}^{V+}\ell_i^\star &= \Pi_{k,i}^{V+,\ell}, && \forall k,i\in\mathcal N^\circ,\\
    A_{k,i}^{V-}u_i^\star &= \Pi_{k,i}^{V-,u},  && \forall k,i\in\mathcal N^\circ,\\
    A_{k,i}^{V+}u_i^\star &= \Pi_{k,i}^{V+,u},  && \forall k,i\in\mathcal N^\circ,\\
    A_{k,i}^{V-}\ell_i^\star &= \Pi_{k,i}^{V-,\ell},  && \forall k,i\in\mathcal N^\circ,\\
    A_{l,i}^{F+}u_i^\star &= \Pi_{l,i}^{F+,u},  && \forall l\in\mathcal L,\ \forall i\in\mathcal N^\circ,\\
    A_{l,i}^{F-}\ell_i^\star &= \Pi_{l,i}^{F-,\ell},  && \forall l\in\mathcal L,\ \forall i\in\mathcal N^\circ.
    \end{align}
\end{subequations}
Their row sums are
\begin{subequations}\label{eq:r1cs_mvm_sums}
    \begin{align}
    \sum_{i\in\mathcal N^\circ}\Pi_{k,i}^{V+,\ell} +\sum_{i\in\mathcal N^\circ}\Pi_{k,i}^{V-,u} &= t_k^{\min}, && \forall k\in\mathcal N^\circ,\\
    \sum_{i\in\mathcal N^\circ}\Pi_{k,i}^{V+,u} +\sum_{i\in\mathcal N^\circ}\Pi_{k,i}^{V-,\ell} &= t_k^{\max}, && \forall k\in\mathcal N^\circ,\\
    \sum_{i\in\mathcal N^\circ}\Pi_{l,i}^{F+,u} +\sum_{i\in\mathcal N^\circ}\Pi_{l,i}^{F-,\ell} &= \zeta_l^{\max}, && \forall l\in\mathcal L.
    \end{align}
\end{subequations}
The guide-balance equality is enforced by
\begin{equation}
    \sum_{i\in\mathcal N^\circ}u_i^\star-\sum_{i\in\mathcal N^\circ}\ell_i^\star=0.
    \label{eq:r1cs_guide_balance}
\end{equation}

Introducing nonnegative slack variables
$\bm s_u,\bm s_\ell,\bm s_V^{\min},\bm s_V^{\max},\bm s_S^{\max}\succeq 0$,
we enforce
\begin{subequations}\label{eq:r1cs_K1_full}
    \begin{align}
        u_i^\star+s_{u,i}-\overline p_i &= 0, && \forall i\in\mathcal N^\circ,\\
        \ell_i^\star+s_{\ell,i}-\underline p_i &= 0, && \forall i\in\mathcal N^\circ,\\
        s_{V,k}^{\min}+|\underline V|_k-|V|_k^0+t_k^{\min} &= 0, && \forall k\in\mathcal N^\circ,\\
        s_{V,k}^{\max}-|\overline V|_k+|V|_k^0+t_k^{\max} &= 0, && \forall k\in\mathcal N^\circ,\\
        s_{S,l}^{\max}-\overline{|S_l^{\mathrm F}|}+|S_l^{\mathrm F,0}|+\zeta_l^{\max} &= 0, && \forall l\in\mathcal L.
    \end{align}
\end{subequations}

The nonnegativity of the guide variables and slack variables is enforced as
\begin{subequations}\label{eq:K1_ranges_full}
    \begin{align}
        &\mathrm{RC}_\kappa(u_i^\star), \mathrm{RC}_\kappa(\ell_i^\star), \mathrm{RC}_\kappa(s_{u,i}), \mathrm{RC}_\kappa(s_{\ell,i}), && \forall i\in\mathcal N^\circ\!\!\!, \label{eq:K1_ranges_guide}\\
        &\mathrm{RC}_\kappa(s_{V,i}^{\min}), \mathrm{RC}_\kappa(s_{V,i}^{\max}), && \forall i\in\mathcal N^\circ\!\!, \label{eq:K1_ranges_voltage}\\
        &\mathrm{RC}_\kappa(s_{S,l}^{\max}), && \forall l\in\mathcal L,
        \label{eq:K1_ranges_line}
    \end{align}
\end{subequations}
Participant buses are modeled by nonzero transaction caps; nonparticipant buses have zero caps.

\subsubsection{Dual Feasibility}

The nonnegativity of the dual variables associated with inequality constraints is certified using the same $\kappa$-bit range constraints. 
The equality multiplier $\lambda_{\mathrm b}^\star$ is unrestricted.
Specifically, the guide-cap dual variables are constrained as
\begin{subequations}\label{eq:K2_ranges_full}
    \begin{align}
        &RC_\kappa(\underline{\rho}^{u}_i), RC_\kappa(\overline{\rho}^{u}_i), RC_\kappa(\underline{\rho}^{\ell}_i), RC_\kappa(\overline{\rho}^{\ell}_i),\quad \forall i\in\mathcal{N}^{\circ}, \\
        &RC_\kappa(\underline{\mu}_k), RC_\kappa(\overline{\mu}_k),\quad \forall k\in\mathcal{N}^{\circ},\\
        &RC_\kappa(\nu_l), \forall l\in\mathcal{L}.
    \end{align}
\end{subequations}

\subsubsection{Stationarity}
To express the stationarity conditions in R1CS form, we introduce auxiliary product variables $\Gamma_{k,i}^{V-,\underline\mu},\ \Gamma_{k,i}^{V+,\underline\mu},\ \Gamma_{k,i}^{V+,\overline\mu},\ \Gamma_{k,i}^{V-,\overline\mu},\ \Gamma_{l,i}^{F+,\overline\nu},\ \Gamma_{l,i}^{F-,\overline\nu}\in\mathbb F_r .$
These variables are constrained as
\begin{subequations}\label{eq:Gamma_defs}
\begin{align}
    A_{k,i}^{V-}\cdot\underline{\mu}_k^\star &= \Gamma_{k,i}^{V-,\underline\mu}, && \forall k,i\in\mathcal N^\circ,\\
    A_{k,i}^{V+}\cdot\underline{\mu}_k^\star &= \Gamma_{k,i}^{V+,\underline\mu}, && \forall k,i\in\mathcal N^\circ,\\
    A_{k,i}^{V+}\cdot\overline{\mu}_k^\star  &= \Gamma_{k,i}^{V+,\overline\mu},  && \forall k,i\in\mathcal N^\circ,\\
    A_{k,i}^{V-}\cdot\overline{\mu}_k^\star  &= \Gamma_{k,i}^{V-,\overline\mu},  && \forall k,i\in\mathcal N^\circ,\\
    A_{l,i}^{F+}\cdot\overline{\nu}_l^\star  &= \Gamma_{l,i}^{F+,\overline\nu},  && \forall l\in\mathcal L,\ \forall i\in\mathcal N^\circ,\\
    A_{l,i}^{F-}\cdot\overline{\nu}_l^\star  &= \Gamma_{l,i}^{F-,\overline\nu},  && \forall l\in\mathcal L,\ \forall i\in\mathcal N^\circ.
\end{align}
\end{subequations}

The stationarity conditions with respect to $u_i$ and $\ell_i$ are then enforced for all $i\in\mathcal N^\circ$ as
\begin{subequations}\label{eq:r1cs_K3_full}
    \begin{align}
        -w_i -\underline{\rho}_{u,i}^\star &+\overline{\rho}_{u,i}^\star -\lambda_{\mathrm b}^\star \notag \\ 
        &+\sum_{v\in\mathcal N^\circ}\Gamma_{v,i}^{V-,\underline\mu} -\sum_{v\in\mathcal N^\circ}\Gamma_{v,i}^{V+,\overline\mu} +\sum_{l\in\mathcal L}\Gamma_{l,i}^{F+,\overline\nu} =0, \label{eq:stationarity_u}\\
        -w_i -\underline{\rho}_{\ell,i}^\star &+\overline{\rho}_{\ell,i}^\star +\lambda_{\mathrm b}^\star \notag \\ 
        &+\sum_{v\in\mathcal N^\circ}\Gamma_{v,i}^{V+,\underline\mu} +\sum_{v\in\mathcal N^\circ}\Gamma_{v,i}^{V-,\overline\mu} +\sum_{l\in\mathcal L}\Gamma_{l,i}^{F-,\overline\nu} =0.
    \label{eq:stationarity_l}
    \end{align}
\end{subequations}

\subsubsection{Complementary Slackness}
The complementary-slackness conditions are enforced for the inequality constraints as follows:
\begin{subequations}\label{eq:r1cs_K4_full}
    \begin{align}
        &\underline{\rho}^{u}_i u_i^\star=0,\quad \overline{\rho}^{u}_i s_{u,i}=0,\quad \underline{\rho}^{\ell}_i \ell_i^\star=0,\quad \overline{\rho}^{\ell}_i s_{\ell,i}=0,\quad \forall i,\\
        &\underline{\mu}_k s^{\min}_{V,k}=0,\quad \overline{\mu}_k s^{\max}_{V,k}=0,\quad \forall k, \\
        &\nu_l s^{\max}_{S,l}=0,\quad \forall l.        
        \label{eq:cs_S_upper}
    \end{align}
\end{subequations}

\section{QAP Realization of the Verification Circuit}
All variables in the proposed circuit are scalar coordinates of
\begin{equation}
    \bm a=(1,\bm x_{\mathrm{pub}},\bm x_{\mathrm{priv}},\bm x_{\mathrm{aux}})\in\mathbb F_r^{M+1},
    \label{eq:qap_assignment}
\end{equation}
where $\bm x_{\mathrm{aux}}$ collects Poseidon states, Merkle nodes, trigonometric-approximation variables, products, range-check bits, and KKT auxiliary variables. Matrix quantities, including $\bm Y$, $\bm J$, $\bm A$, $\bm A^{\mathrm V}$, and $\bm A^{\mathrm F}$, are vectorized entrywise. The complete primitive-constraint set is partitioned as
\begin{align}
\begin{split}
    \mathcal Q =&\ \mathcal Q_{\mathrm{Poseidon}}\cup\mathcal Q_{\mathrm{Merkle}}\cup\mathcal Q_Y\cup\mathcal Q_{\mathrm{Trig}}\cup\mathcal Q_{\mathrm{PF}}\\
    &\cup\mathcal Q_J\cup\mathcal Q_A\cup\mathcal Q_{A_F}\cup\mathcal Q_{\mathrm{Split}}\cup\mathcal Q_{\mathrm{Range}}\cup\mathcal Q_{\mathrm{KKT}} .
\end{split}
\label{eq:qap_constraint_partition}
\end{align}
Each element of $\mathcal Q$ is expanded into the R1CS form in~\eqref{eq:R1CS_std}. The corresponding QAP is constructed by interpolation over $n_c=|\mathcal Q|$ points, and circuit satisfaction is equivalent to the divisibility condition
\begin{equation}
    U(X)V(X)-W(X)=t(X)H(X),
    \label{eq:qap_divisibility}
\end{equation}
where $U,V,W,t$, and $H$ follow the definitions in Section~II. Therefore, a valid Groth16 proof certifies, in one statement, network-parameter commitment, sensitivity binding, sign-split consistency, guide feasibility, and guide optimality.

\section{Zero-Knowledge Proof Procedure}
This section describes the zero-knowledge proof procedure. As shown in Fig.~\ref{fig:flowDiagram}, the off-chain layer performs circuit construction, circuit review, MPC-based setup, guide computation, input construction, and Groth16 proof generation. The blockchain layer stores the approved circuit record, anchors transaction-specific statement commitments, verifies submitted proofs, prevents replay, and executes the transaction only after successful verification. This design keeps confidential network data off-chain while giving participants a common, tamper-evident verification interface.
\begin{figure}[ht]
    \centering
    \includegraphics[width=1.0\columnwidth]{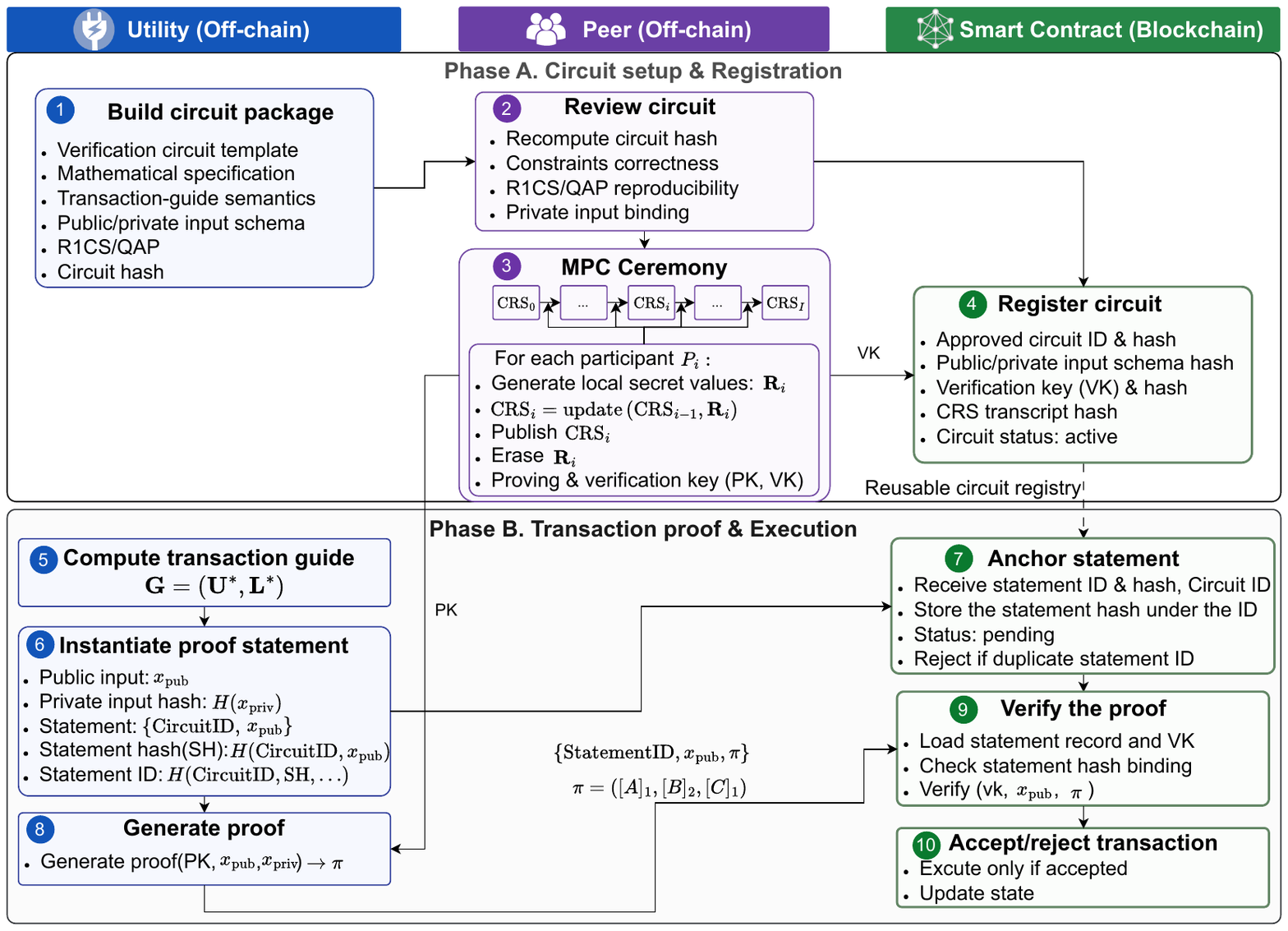}
    \vspace{-6mm}
    \caption{Zero-knowledge proof workflow for the proposed proof. Phase~A registers a reusable circuit and its
    verification key, whereas Phase~B anchors a transaction guide statement, generates a Groth16 proof, and executes the transaction only after successful verification.}
    \label{fig:flowDiagram}
\end{figure}
\subsection{Circuit Setup and Registration}
The utility first constructs a circuit package containing the circuit template, mathematical specification, bus-wise transaction-guide semantics, input schema, and R1CS/QAP representation. The circuit encodes the feasibility and optimality conditions of
\begin{align}
    \mathcal B(\bm u,\bm\ell) = \left\{\Delta\bm P\in\mathbb R^n\ \middle|\
    -\bm\ell\preceq \Delta\bm P\preceq \bm u \right\},
\end{align}
and verifies that the published guide $(\bm u^\star,\bm\ell^\star)$ is feasible and optimal for~\eqref{eq:max_box_LP_vector}. The circuit package is hashed as
\begin{align}
    h_C
    =
    H_{\mathsf{CIR}}
    \left(
    \mathsf{circuit\_package}
    \right).
\end{align}
The peers review the circuit by recomputing $h_C$, checking constraint correctness, verifying R1CS/QAP reproducibility, and confirming the intended guide statement. After approval, they execute an MPC ceremony to generate a circuit-specific common reference string (CRS),
\begin{align}
    \mathrm{CRS}_{\mathsf{cid}} = \left(\mathrm{pk}_{\mathsf{cid}},\mathrm{vk}_{\mathsf{cid}}\right),
\end{align}
where $\mathrm{pk}_{\mathsf{cid}}$ is the proving key used off-chain by the utility and $\mathrm{vk}_{\mathsf{cid}}$ is the verification key used by the smart contract. The smart contract stores a circuit record
\begin{align}
    \mathsf{CircuitRecord}[\mathsf{cid}] =\left(h_C,\; h_{\mathsf{io}},\; \mathrm{vk}_{\mathsf{cid}},\; h_{\mathsf{vk}},\; h_{\mathsf{CRS}},\; \mathsf{active} \right),
\end{align}
where $h_{\mathsf{io}}$ is the input-schema hash, $h_{\mathsf{vk}}$ is the verification-key hash, and $h_{\mathsf{CRS}}$ is the CRS transcript hash. This step stores only reusable circuit-level information.

\subsection{Statement Anchoring and Proof Generation}
For each transaction instance, the utility computes the published bus-wise guide $(\bm u^\star,\bm\ell^\star)$ and forms the public input vector $\bm x_{\mathrm{pub}}^{\mathrm{TG}}$ according to~\eqref{eq:x_pub}. To bind this large public input vector to an on-chain statement, the utility first computes
\begin{align}
    h_{\mathsf{pub}} = H_{\mathsf{PUB}} \left(\bm x_{\mathrm{pub}}^{\mathrm{TG}} \right),
    \label{eq:public_input_hash}
\end{align}
where $H_{\mathsf{PUB}}$ is a domain-separated hash function for the public input vector. Let $h_{\mathsf{ctx}}$ denote a transaction-context hash, $n_{\mathsf{tx}}$ a nonce, and $d$ a deadline. The transaction statement is defined as 
\begin{align}
    \mathsf{stmt} = \left(\mathsf{cid},\; h_{\mathsf{pub}},\; h_{\mathsf{ctx}},\; n_{\mathsf{tx}},\; d\right).
\end{align}
The statement commitment is then computed as
\begin{align}
    h_{\mathsf{stmt}}
    =
    H_{\mathsf{STMT}}
    \left(
    \mathsf{cid},\;
    h_{\mathsf{pub}},\;
    h_{\mathsf{ctx}},\;
    n_{\mathsf{tx}},\;
    d
    \right).
    \label{eq:statement_hash}
\end{align}
The lookup identifier is derived as
\begin{align}
    \mathsf{sid}
    =
    H_{\mathsf{SID}}
    \left(
    \mathsf{owner},\mathsf{cid},h_{\mathsf{stmt}},s
    \right),
    \label{eq:statement_id}
\end{align}
where $\mathsf{owner}$ is the transaction owner and $s$ is a salt. The
identifier $\mathsf{sid}$ is only a lookup key, whereas $h_{\mathsf{stmt}}$ binds the circuit identifier, public input vector, and transaction context.

The statement is anchored by calling
\begin{align}
    \mathsf{AnchorStatement}
    \left(
    \mathsf{sid},\mathsf{cid},h_{\mathsf{stmt}},d,n_{\mathsf{tx}}
    \right).
\end{align}
If $\mathsf{sid}$ is not duplicated and the circuit is active, the smart contract
stores
\begin{align}
\mathsf{StatementRecord}[\mathsf{sid}]
=
\left(
\mathsf{cid},\;
h_{\mathsf{stmt}},\;
\mathsf{owner},\;
d,\;
n_{\mathsf{tx}},\;
\mathsf{pending},\;
\mathsf{unused}
\right).
\end{align}
This record ensures that a later proof corresponds to the same circuit, public input vector, and transaction context committed in~\eqref{eq:statement_hash}.

The utility then generates a Groth16 proof
\begin{align}
    \pi
    \leftarrow
    \mathsf{Prove}
    \left(
    \mathrm{pk}_{\mathsf{cid}},
    \bm x_{\mathrm{pub}}^{\mathrm{TG}},
    \bm x_{\mathrm{wit}}^{\mathrm{TG}}
    \right),
    \label{eq:zk_prove}
\end{align}
where the resulting proof is
\begin{align}
    \pi
    =
    \left(
    [A]_1,[B]_2,[C]_1
    \right).
\end{align}
The private input $\bm x_{\mathrm{wit}}^{\mathrm{TG}}$ is never sent to the smart contract.

\subsection{On-chain Verification and Execution}
The utility submits a proof-carrying transaction
\begin{align}
    \mathsf{ExecuteWithProof}
    \left(
    \mathsf{sid},\bm x_{\mathrm{pub}}^{\mathrm{TG}},
    h_{\mathsf{ctx}},n_{\mathsf{tx}},d,\pi
    \right).
\end{align}
The smart contract loads the statement and circuit records, checks ownership or authorization, expiration, pending status, and nonce use, and then recomputes the public input hash and statement hash:
\begin{align}
    h_{\mathsf{pub}}'
    &=
    H_{\mathsf{PUB}}
    \left(
    \bm x_{\mathrm{pub}}^{\mathrm{TG}}
    \right),\\
    h_{\mathsf{stmt}}'
    &=
    H_{\mathsf{STMT}}
    \left(
    \mathsf{cid},\;
    h_{\mathsf{pub}}',\;
    h_{\mathsf{ctx}},\;
    n_{\mathsf{tx}},\;
    d
    \right).
\end{align}
The submitted public input is accepted only if
\begin{align}
    h_{\mathsf{stmt}}'
    =
    \mathsf{StatementRecord}[\mathsf{sid}].h_{\mathsf{stmt}}.
    \label{eq:statement_binding_check}
\end{align}
This statement-binding check prevents the prover from anchoring one guide statement and later submitting a proof for a different public input vector or context. After this check, the smart contract verifies the Groth16 proof:
\begin{align}
    \mathsf{Verify}
    \left(
    \mathrm{vk}_{\mathsf{cid}},
    \bm x_{\mathrm{pub}}^{\mathrm{TG}},
    \pi
    \right)
    \in
    \{0,1\}.
\end{align}
Internally, the verifier computes the public-input accumulation term
\begin{align}
    [X]_1
    =
    [\Psi_0]_1
    +
    \sum_{i=1}^{n_{\mathsf{pub}}}
    a_i[\Psi_i]_1,
\end{align}
where $a_1,\ldots,a_{n_{\mathsf{pub}}}$ are the public input elements. It then checks
\begin{align}
    \mathsf{e}([A]_1,[B]_2)
    \stackrel{?}{=}
    \mathsf{e}([\alpha]_1,[\beta]_2)
    \cdot
    \mathsf{e}([X]_1,[\gamma]_2)
    \cdot
    \mathsf{e}([C]_1,[\delta]_2),
    \label{eq:verifeq}
\end{align}
where $\mathsf{e}: \mathbb G_1\times\mathbb G_2\rightarrow\mathbb G_T$ is the bilinear pairing. If both~\eqref{eq:statement_binding_check} and \eqref{eq:verifeq} hold, the smart contract marks $\mathsf{sid}$ and $n_{\mathsf{tx}}$ as used and executes the transaction. Otherwise, the transaction is rejected and no state transition is performed.

\section{Case Study}
\subsection{Simulation Setup}
The proposed method is evaluated on a modified IEEE 33-node radial distribution system~\cite{dolatabadi2020enhanced}, with seller and buyer peers shown in Fig.~\ref{fig:caseNetwork}. The operating point is obtained from an AC power-flow solution under the base-load condition. At this point, $\bm A^{\mathrm V}$ and $\bm A^{\mathrm F}$ are computed and used to construct the robust transaction guide. The distributed market-clearing process is modeled using the dual-gradient method in~\cite{kim2024causality}.

\begin{figure}[ht]
    \vspace{0mm}
    \centering    
    \includegraphics[width=1.0\columnwidth]{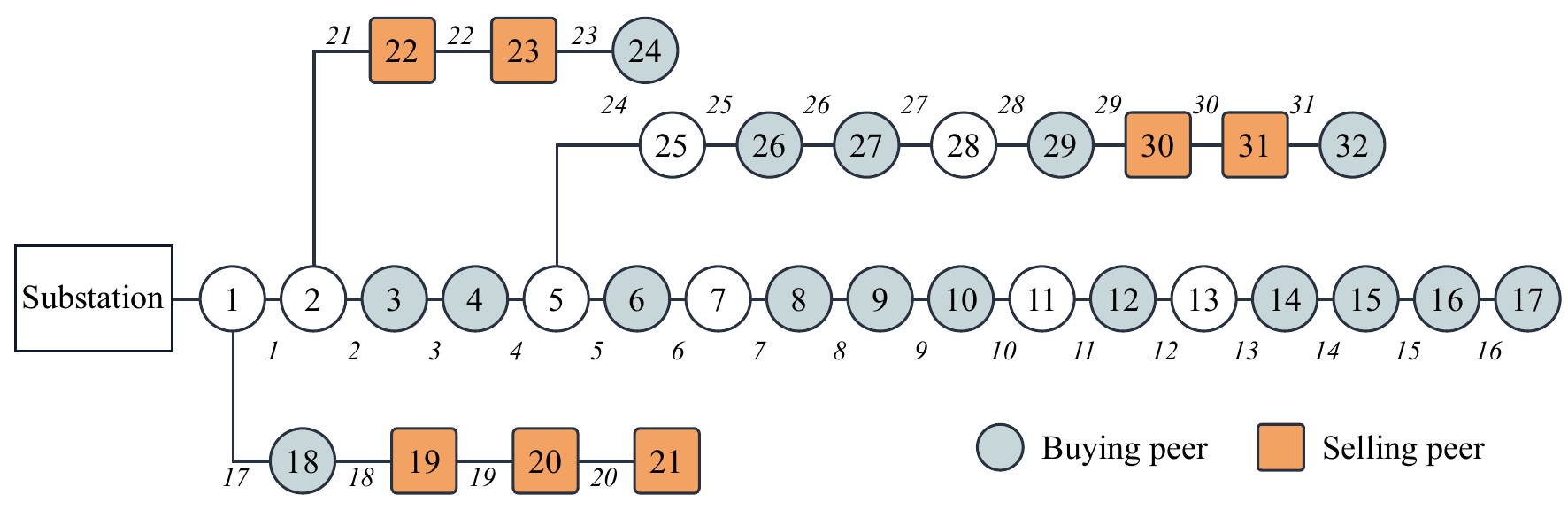}
    \vspace{-6mm}
    \caption{A modified IEEE 33-node distribution network with nodal indices in circles and squares. Italic numbers represent line indices.\vspace{0mm}}
    \label{fig:caseNetwork}
    \vspace{0mm}
\end{figure}
\begin{figure}[ht]
    \centering
     \subfigure[Nodal voltage magnitudes]{%
        \includegraphics[width=1.0\linewidth]{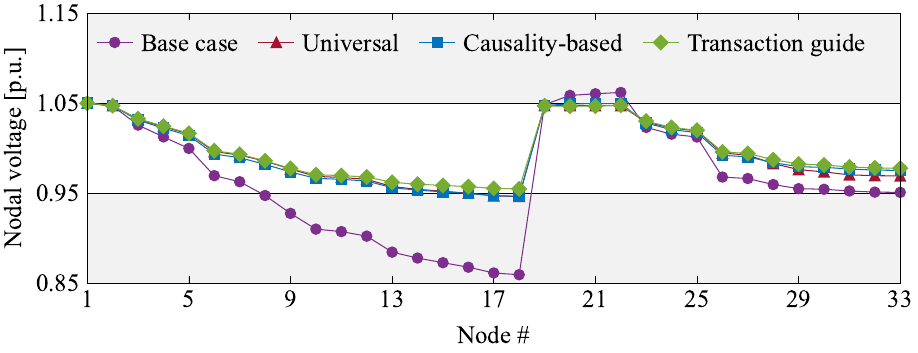}
        \label{fig:voltage}}\vfill
        \vspace{-2mm}
     \subfigure[Line loading (\% of flow limit)]{%
        \includegraphics[width=1.0\linewidth]{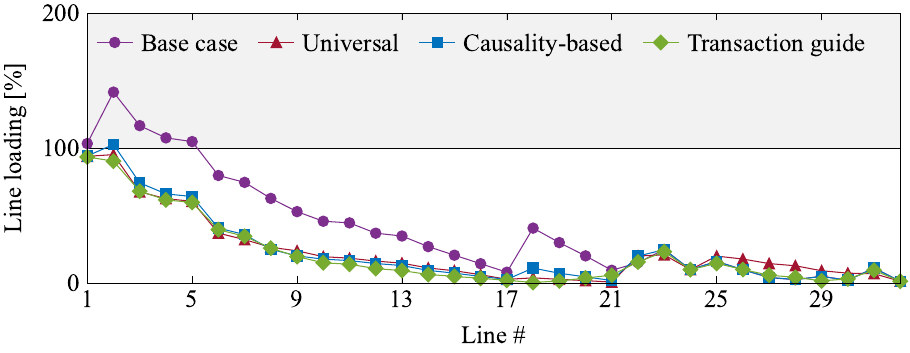}
        \label{fig:lineloading}}\vfill
        \vspace{-2mm}
    \caption{Nodal voltage and line-loading profiles under the base case, universal policy, causality-based policy, and transaction-guide policy.}
    \label{fig:volandline}
    \vspace{-0mm}   
\end{figure}

Four market-operation cases are compared:
\begin{enumerate}
    \item \textbf{Base case}: market clearing without network-security correction or transaction-guide limits;
    \item \textbf{Universal policy}: a uniform penalty is applied when network violations are observed;
    \item \textbf{Causality-based policy}: voltage and line-flow sensitivity terms are used to update participant-specific network charges;
    \item \textbf{Transaction-guide policy}: the proposed bus-wise transaction guide is imposed as an ex-ante operating envelope on the market-clearing process.
\end{enumerate}

The case study also evaluates utility-side tampering, where the utility modifies the guide after proof generation while reusing the original Groth16 proof. The verifier checks whether the modified public statement remains consistent with the certified computation. For verification, the computed guide is converted into fixed-point integers using a scaling factor of $10^6$. The R1CS instance, input, and proof artifacts are generated using the PySNARK/snarkJS workflow over the BN254 scalar field. The Groth16 proof is verified off-chain using \texttt{snarkjs} and on-chain in a local Hardhat EVM environment with a Solidity verifier and proof-registry contract.
\begin{table*}[t]
\centering
\caption{Utility-side transaction-guide tampering scenarios, modified public inputs, and market impacts}
\label{tab:utility_tamper_scenarios}
\footnotesize
\setlength{\tabcolsep}{3pt}
\renewcommand{\arraystretch}{1.15}
\begin{tabular*}{\textwidth}{
    @{\extracolsep{\fill}}
    p{0.14\textwidth}
    p{0.29\textwidth}
    p{0.28\textwidth}
    ccc@{}}
\hline
\multirow{2}{0.14\textwidth}{Scenario}
& \multirow{2}{0.29\textwidth}{Tampering action}
& \multirow{2}{0.28\textwidth}{Tampered public input}
& Trade 
& $\Delta$Trade 
& $\Delta P_{\rm loss}$ \\
& & & [MW] & [MW] & [MW] \\
\hline

Honest TG & No modification is applied to the certified transaction guide. & None & 1.570 & 0.000 & 0.000 \\

Low-cost seller restriction & The seller-side guide at bus 31 is reduced from $0.650$ MW to $0.000$ MW. & Seller guide at bus 31, public-input entry 22: $650000 \rightarrow 0$. & 1.184 & $-0.386$ & 0.006 \\

Buyer guide restriction & The buyer-side guide at bus 9 is reduced from $0.300$ MW to $0.000$ MW. &
Buyer guide at bus 9, public-input entry 28:
$300000 \rightarrow 0$. & 1.437 & $-0.134$ & $-0.007$ \\

Least-loss seller expansion & The seller-side guide at bus 22 is increased from $0.343$ MW to $0.700$ MW. & Seller guide at bus 22, public-input entry 15: $343489 \rightarrow 700000$. & 1.601 & 0.031 & 0.004 \\

Sensitivity overestimation & The guide is recomputed after overestimating all voltage and line-flow sensitivities by $20\%$, producing a more conservative guide. & Multiple guide entries are modified, including public-input entries 15 and 30. &
1.487 & $-0.083$ & $-0.008$ \\
\hline
\end{tabular*}
\end{table*}
\subsection{Transaction Guide Performance}
Fig.~\ref{fig:volandline} compares the network states obtained by applying an AC power flow to the cleared trading results. The base case causes severe violations of both the lower voltage limit and the thermal limit. The universal policy reduces the line loading below the thermal limit, but its minimum voltage still falls below the admissible bound. The causality-based policy also violates both the voltage and thermal constraints under the AC power-flow evaluation. In contrast, the transaction-guide policy is the only case that remains feasible, keeping the realized operating point inside both the voltage and line-loading limits.
\begin{table}[t]
    \centering
    \caption{Market and network outcomes estimated by AC power flow under different policies}
    \label{tab:market-performance}
    \begin{tabular}{lcccc}
        \hline
        Policy & \makecell{Trade\\$[\mathrm{MW}]$} & \makecell{Min. voltage \\$[\mathrm{p.u.}]$} & \makecell{Max. loading \\$[\%]$} & \makecell{Feasibility} \\
        \hline
        Base case         & 3.638 & 0.859 & 141.506 & No  \\
        Universal         & 0.913 & 0.946 & 95.237  & No  \\
        Causality-based   & 2.152 & 0.946 & 102.758 & No  \\
        Transaction guide & 1.570 & 0.955 & 93.514  & Yes \\
        \hline
    \end{tabular}\vspace{-0mm}
\end{table}
Table~\ref{tab:market-performance} explains these profiles through the trading results. The base case clears 3.638 MW, causing the minimum voltage to drop to 0.859 p.u. and the maximum line loading to reach 141.506\%. The universal policy substantially curtails trade to 0.913 MW and reduces the maximum line loading to 95.237\%, but the minimum voltage remains below 0.95 p.u. limit. The causality-based policy clears 2.152 MW; however, the resulting AC operating point remains infeasible, with a minimum voltage of 0.946 p.u. and a maximum line loading of 102.758\%. Although the universal and causality-based policies reduce trading volumes using network-security signals—uniform violation penalties in the former and sensitivity-based charges in the latter—their corrective actions are still affected by linearization errors, leading to constraint violations under the nonlinear AC power-flow evaluation. By contrast, the transaction-guide policy clears 1.570 MW while satisfying both voltage and line-flow constraints, yielding a minimum voltage of 0.955 p.u. and a maximum line loading of 93.514\%. This is because the proposed transaction guide imposes an ex-ante robust envelope on admissible transactions, thereby limiting the cleared volume to a level that does not induce network violations under the AC operating condition.

\begin{table}[t]
    \centering
    \caption{Representative bus-wise transaction-guide values and realized transactions}
    \vspace{-2mm}
    \label{tab:buswise-guide}
    \begin{tabular}{cccccc}
        \hline
        Bus & \makecell{Seller cap\\$[\mathrm{MW}]$} & \makecell{Buyer cap\\$[\mathrm{MW}]$} 
            & \makecell{$u_i$\\$[\mathrm{MW}]$} & \makecell{$\ell_i$\\$[\mathrm{MW}]$} 
            & \makecell{Realized trade \\$[\mathrm{MW}]$} \\
        \hline
        22 & 0.700 & 0     & 0.343 & 0     & 0.340 \\
        23 & 0.700 & 0     & 0.700 & 0     & 0.406 \\
        30 & 0.750 & 0     & 0.750 & 0     & 0.408 \\
        31 & 0.650 & 0     & 0.650 & 0     & 0.417 \\
        12 & 0     & 0.250 & 0     & 0.113 & 0.074 \\
        32 & 0     & 0.200 & 0     & 0.200 & 0.130 \\
        \hline
    \end{tabular}\vspace{-0mm}
\end{table}

Table~\ref{tab:buswise-guide} reports representative guide values and realized transactions. For sellers, realized injections remain below the upward guides; for buyers, realized volume remains below the downward guides. The guide values show how network constraints shape admissible transactions: bus 22 receives only 0.343 MW despite a seller cap of 0.700 MW, whereas buses 23, 30, and 31 receive their full seller caps. Thus, the proposed method publishes a verifiable envelope before market clearing and, under the subsequent AC power-flow evaluation, prevents transactions that would drive the network outside the admissible operating region.

\subsection{Verification Statement and Tampering Scenarios}
We evaluate whether zero-knowledge verification detects utility-side manipulation of the certified guide. The attacks modify the published guide after proof generation while reusing the original proof. Peers observe only the verification key, the public input vector, and the proof $(A,B,C)$; any inconsistency must be detected through the pairing equation.

Table~\ref{tab:utility_tamper_scenarios} summarizes the tampering scenarios. Market quantities are rounded to three decimal places, and public-input values use the fixed-point scaling factor $10^6$. Public-input entry numbers refer to the serialized Groth16 input vector, not physical bus numbers. Changes in trade and network loss are measured relative to the honest transaction-guide case.

The first two scenarios restrict selected seller- or buyer-side guides and reduce cleared P2P trade by $0.386$ MW and $0.134$ MW, respectively. These changes can increase the utility's fallback retail sales proxy, with different loss impacts. The third scenario selectively expands the seller-side guide at bus 22 from $0.343$ MW to $0.700$ MW, preferentially increasing one seller's admissible width while increasing loss by $0.004$ MW. The fourth scenario overestimates voltage and line-flow sensitivities by $20\%$, producing a more conservative guide and reducing cleared trade by $0.083$ MW.

Table~\ref{tab:pairing_tamper_results} reports the resulting pairing outcomes. For compact reporting, the left and right-hand sides of \eqref{eq:verifeq} are shown as hexadecimal digests of the serialized target-group elements. The honest proof is accepted. In all tampering scenarios, the public input changes while the original proof is reused; the reconstructed $vk_x$ changes, \eqref{eq:verifeq} fails, and the proof is rejected.

\begin{table}[t]
\centering
\caption{Verification results under transaction-guide tampering}
\label{tab:pairing_tamper_results}
\footnotesize
\renewcommand{\arraystretch}{1.15}
    \begin{tabular}{
    p{0.35\linewidth}
    >{\centering\arraybackslash}p{0.20\linewidth}
    >{\centering\arraybackslash}p{0.20\linewidth}
    >{\centering\arraybackslash}p{0.09\linewidth}}
        \hline
        Scenario & LHS value & RHS value & Result \\
        \hline
        Honest TG & \texttt{a76e9c9c3e6e} & \texttt{a76e9c9c3e6e} & Pass \\
        Low-cost seller restriction & \texttt{c9ad92401b16} & \texttt{a76e9c9c3e6e} & Fail \\
        Buyer guide restriction & \texttt{cddaea144df6} & \texttt{a76e9c9c3e6e} & Fail \\ 
        Least-loss seller expansion & \texttt{43e57b4cf08f} & \texttt{a76e9c9c3e6e} & Fail \\
        Sensitivity overestimation & \texttt{bb79b8332608}  & \texttt{a76e9c9c3e6e} & Fail \\
        \hline
    \end{tabular}
\end{table}
These results show that the certified guide is cryptographically bound to the Groth16 proof through the public input vector. Tampered guides may change market outcomes, but they are inconsistent with the proof generated for the honest guide and are rejected by the verifier.

\subsection{Zero-Knowledge Proof on Blockchain}
This subsection evaluates the computational and blockchain overhead. The ZKP circuit proves guide-cap constraints, guide-balance, network-security inequalities, and the binding relation between private network parameters and the public verification statement. For the IEEE 33-bus test feeder, the certificate includes 24 participant buses, 48 guide variables, and 96 security rows. The resulting Groth16 circuit contains 76,565 R1CS constraints and 82,885 wires.

\begin{table}[t]
    \centering
    \caption{ZKP circuit and artifact overhead}
    \label{tab:zkp_overhead}
    \setlength{\tabcolsep}{4pt}
    \renewcommand{\arraystretch}{1.10}
    \begin{tabularx}{\linewidth}{
        >{\raggedright\arraybackslash}X
        r
    }
    \hline
    Metric & Value \\
    \hline
    Participant buses & 24 \\
    Guide variables & 48 \\
    Security rows & 96 \\
    Total injection guide & 2.443 MW \\
    Total withdrawal guide & 2.443 MW \\
    R1CS constraints & 76,565 \\
    Wires & 82,885 \\
    Public inputs & 194 \\
    Proof artifact size & 806 bytes \\
    Public-input JSON size & 1,889 bytes \\
    Verification-key size & 38,252 bytes \\
    input size & 2,652,396 bytes \\
    R1CS file size & 27,120,948 bytes \\
    \texttt{snarkjs} verification & Passed \\
    \hline
    \end{tabularx}
\end{table}

Table~\ref{tab:zkp_overhead} summarizes the circuit size and artifacts. The proof is 806 bytes because a Groth16 proof has a constant number of group elements. The verification key, input, and R1CS artifacts are larger because they encode the circuit structure or complete assignment, but they are not repeatedly submitted during normal operation.

To evaluate blockchain overhead, the Groth16 verifier and proof-registry contract are deployed in a local Hardhat EVM environment. The verifier takes $(A,B,C)$ and 194 public inputs, while the registry stores the circuit identifier, verification-key hash, public-input hash, proof hash, and verification result.

\begin{table}[t]
    \centering
    \caption{Blockchain verification overhead and indicative USD cost}\vspace{0mm}
    \label{tab:blockchain_overhead}
    \setlength{\tabcolsep}{3pt}
    \renewcommand{\arraystretch}{1.15}
    \begin{tabular}{p{0.50\linewidth}rr}
    \hline
    Metric & Value & Cost [USD] \\
    \hline
    Public-input count & 194 & -- \\
    Direct Solidity verifier result & True & -- \\
    Registry static-call result & True & -- \\
    Stored verification result & True & -- \\
    Transaction status & 1 & -- \\
    Circuit registration gas & 113,628 & 0.011 \\
    Proof submission gas & 1,761,007 & 0.169 \\
    Total gas & 1,874,635 & 0.180 \\
    \hline
    \end{tabular}
    \vspace{0mm}
    \begin{minipage}{0.95\linewidth}
    \footnotesize
    Cost estimates use a gas price of $0.059$ gwei per gas. One gwei is equal to $10^{-9}$ ETH~\cite{ethereum_gas_docs}, 
    and the ETH price is fixed at $1{,}625.80$ USD/ETH based on the market price at the time of access~\cite{eth_price_cmc}.
    \end{minipage}
\end{table}
Table~\ref{tab:blockchain_overhead} reports measured gas. Circuit registration requires 113,628 gas, while proof submission and verification require 1,761,007 gas, yielding a total of 1,874,635 gas. The proof-submission overhead is dominated by BN254 pairing operations and calldata/storage for the 194 public inputs.

\section{Conclusion}
This paper proposed a zero-knowledge verification method for network-constrained transaction guides for P2P energy trading. The method addresses the confidentiality–verifiability tradeoff in sensitivity-based coordination by enabling participants to verify the computational integrity of utility-provided guides without revealing private network sensitivities. A robust guide is formulated using sign-decomposed voltage and line-flow sensitivities, and its feasibility and LP optimality were encoded as R1CS/QAP constraints and verified through a Groth16 proof.

Blockchain anchoring of circuit records, statement commitments, public inputs, verification keys, nonces, and proof results provides tamper-evident auditability while keeping confidential data off-chain. Case studies on a modified IEEE 33-bus distribution system showed that the proposed guide maintains post-clearing network feasibility and rejects representative public-input and witness-inconsistency attacks. The implemented circuit produced an 806-byte proof and practical on-chain verification overhead in the tested case. These results indicate that zero-knowledge verification can serve as a computational trust layer for distribution-level energy markets. 
\appendix
\subsection{Trigonometric Approximation in R1CS}\label{App:trig}
For line $(i,j)\in\mathcal L$, define
\begin{equation}
    \theta_{ij}-\theta_i^{0}+\theta_j^{0} = 0.
    \label{eq:r1cs_theta_ij_lin}
\end{equation}
We use truncated approximations
\begin{equation}
    s_{ij}
    =
    \theta_{ij}-\frac{\theta_{ij}^{3}}{6}+\frac{\theta_{ij}^{5}}{120},
    \qquad
    c_{ij}
    =
    1-\frac{\theta_{ij}^{2}}{2}+\frac{\theta_{ij}^{4}}{24}.
    \label{eq:approx_trig}
\end{equation}
With $\hat\theta_{ij}=\theta_{ij}^{2}$, these approximations are enforced by
\begin{equation}
    \label{eq:r1cs_trig_edge_min}
    \begin{alignedat}{2}
    \theta_{ij}\cdot\theta_{ij} &= \hat\theta_{ij},
    &\quad \hat\theta_{ij}\cdot 120^{-1}-6^{-1} &= t^{\mathrm s}_{ij},\\
    \hat\theta_{ij}\cdot t^{\mathrm s}_{ij} &= u^{\mathrm s}_{ij},
    &\quad u^{\mathrm s}_{ij}+1 &= v^{\mathrm s}_{ij},\\
    \theta_{ij}\cdot v^{\mathrm s}_{ij} &= s_{ij},
    &\quad \hat\theta_{ij}\cdot 24^{-1}-2^{-1} &= t^{\mathrm c}_{ij},\\
    \hat\theta_{ij}\cdot t^{\mathrm c}_{ij} &= u^{\mathrm c}_{ij},
    &\quad u^{\mathrm c}_{ij}+1 &= c_{ij}.
    \end{alignedat}
\end{equation}
where $2^{-1},6^{-1},24^{-1},120^{-1}\in\mathbb F_r$ are fixed constants.
The approximation error is bounded over the operating angle interval \(|\theta_{ij}|\le \Theta_{\max}\). For the fifth-order sine and fourth-order cosine approximations,
\[
|\sin\theta_{ij}-s_{ij}| \le \frac{\Theta_{\max}^7}{7!},\quad
|\cos\theta_{ij}-c_{ij}| \le \frac{\Theta_{\max}^6}{6!}.
\]
These bounds are propagated through the Jacobian and line-flow sensitivity expressions to obtain the tightening margins \(\epsilon_V^{\mathrm{trig}}\) and \(\epsilon_S^{\mathrm{trig}}\). In the case study, \(\Theta_{\max}\) is computed from the base AC power-flow solution and the resulting approximation margins are included in the circuit security margins.

\subsection{Line-Flow in R1CS}\label{App:flowEquation}
For each monitored line $l=(i,j)\in\mathcal L$, the line flow model without explicit line-charging terms is enforced by
\begin{equation}
    \label{eq:r1cs_flow_using_Yoff}
    \begin{alignedat}{2}
    d^{\mathrm p}_{ij}+\alpha_i G_{ij} &= 0,
    &\quad
    \alpha_iB_{ij}-d^{\mathrm q}_{ij} &= 0,\\
    \beta_{ij}v_{ij}-w^{\mathrm p}_{ij} &= 0,
    &\quad
    \beta_{ij}u_{ij}-w^{\mathrm q}_{ij} &= 0,\\
    d^{\mathrm p}_{ij}+w^{\mathrm p}_{ij}-P^{\mathrm F}_{ij} &= 0,
    &\quad
    d^{\mathrm q}_{ij}+w^{\mathrm q}_{ij}-Q^{\mathrm F}_{ij} &= 0.
    \end{alignedat}
\end{equation}
Thus, $P^{\mathrm F}_{ij}=-\alpha_iG_{ij}+\beta_{ij}v_{ij}$ and $Q^{\mathrm F}_{ij}=\alpha_iB_{ij}+\beta_{ij}u_{ij}$. Auxiliary products bind $|S^{\mathrm F,0}_{ij}|$:
    \begin{align}\label{eq:r1cs_Smag_bind_using_Yoff}
        \begin{split}
        P^{\mathrm F}_{ij}\cdot P^{\mathrm F}_{ij} &= P^{\mathrm F,2}_{ij},\\
        Q^{\mathrm F}_{ij}\cdot Q^{\mathrm F}_{ij} &= Q^{\mathrm F,2}_{ij},\\
        P^{\mathrm F,2}_{ij}+Q^{\mathrm F,2}_{ij}-(|S^{\mathrm F,0}_{ij}|)^2 &= 0.
        \end{split}
    \end{align}
\subsection{Line-Flow Sensitivity in R1CS}\label{App:lineflowsens}
The partial derivatives of $P^{\mathrm F}_{ij}$ and $Q^{\mathrm F}_{ij}$ are
\begin{equation}
    \label{eq:partial_lpf}
    \begin{gathered}
    \begin{alignedat}{2}
    \beta_{ij}u_{ij} +\frac{\partial P^{\mathrm F}_{ij}}{\partial \theta_i} &= 0, &\quad \beta_{ij}u_{ij} -\frac{\partial P^{\mathrm F}_{ij}}{\partial \theta_j} &= 0,\\
    \beta_{ij}v_{ij} -\frac{\partial Q^{\mathrm F}_{ij}}{\partial \theta_i} &= 0, &\quad \beta_{ij}v_{ij} +\frac{\partial Q^{\mathrm F}_{ij}}{\partial \theta_j} &= 0,\\
    |V_i^0|u_{ij} -\frac{\partial Q^{\mathrm F}_{ij}}{\partial |V_j|} &= 0, &\quad |V_i^0|v_{ij} -\frac{\partial P^{\mathrm F}_{ij}}{\partial |V_j|} &= 0,\\
    \end{alignedat}
    \\[-0.2em]
    \begin{aligned}
    -2|V_i^0|G_{ij}
    +|V_j^0|v_{ij}
    -\frac{\partial P^{\mathrm F}_{ij}}{\partial |V_i|} &= 0,\\
    2|V_i^0|B_{ij}
    +|V_j^0|u_{ij}
    -\frac{\partial Q^{\mathrm F}_{ij}}{\partial |V_i|} &= 0.
    \end{aligned}
    \end{gathered}
\end{equation}
Using \eqref{eq:partial_lpf}, auxiliary products decompose $\Sigma^{P}_{l,k}$ and $\Sigma^{Q}_{l,k}$:
\begin{align}\label{eq:r1cs_S_sensitivity_terms}
    \begin{split}
    &\frac{\partial P^{\mathrm F}_{l}}{\partial |V_r|} \cdot A^{V}_{rk} = t^{P,V_r}_{l,k}, \; \frac{\partial Q^{\mathrm F}_{l}}{\partial |V_r|} \cdot A^{V}_{rk} = t^{Q,V_r}_{l,k}, \; r\in\{i,j\},\\
    &\frac{\partial P^{\mathrm F}_{l}}{\partial \theta_r} \cdot A^{\theta}_{rk} = t^{P,\theta_r}_{l,k}, \; \frac{\partial Q^{\mathrm F}_{l}}{\partial \theta_r} \cdot A^{\theta}_{rk} =t^{Q,\theta_r}_{l,k},
    \; r\in\{i,j\},\\
    &t^{P,V_i}_{l,k} +t^{P,V_j}_{l,k} +t^{P,\theta_i}_{l,k} +t^{P,\theta_j}_{l,k} = \Sigma^{P}_{l,k},\\
    &t^{Q,V_i}_{l,k} +t^{Q,V_j}_{l,k} +t^{Q,\theta_i}_{l,k} +t^{Q,\theta_j}_{l,k} = \Sigma^{Q}_{l,k}.
    \end{split}
\end{align}

\bibliographystyle{IEEEtran}
\bibliography{reference}
\end{document}